\documentclass[]{IEEEtran}
\usepackage{cite}
\usepackage{amsmath,amssymb,amsfonts}
\usepackage{algorithmic}
\usepackage{graphicx}
\usepackage{textcomp}

\newtheorem{theorem}{Theorem}
\newtheorem{lem}{Lemma}
\newtheorem{remark}{Remark}

\newcommand{\proof}{\noindent \textit{Proof}: }

\def\BibTeX{{\rm B\kern-.05em{\sc i\kern-.025em b}\kern-.08em
		T\kern-.1667em\lower.7ex\hbox{E}\kern-.125emX}}

\begin{document}

	\title{
Robust Energy Shaping Control of an Underactuated Inverted Pendulum}
\author{ M. Reza J. Harandi and Mehrzad Namvar
}

\maketitle
\begin{abstract}
Although the stabilization of underactuated systems remains a challenging problem, the total energy shaping approach provides a general framework for addressing this objective. However, the practical implementation of this method is hindered by the need to analytically solve a set of partial differential equations (PDEs), which constitutes a major obstacle. In this paper, a rotary inverted pendulum system is considered, and an interconnection and damping assignment passivity-based control (IDA-PBC) scheme is developed by deriving concise analytical solutions to the kinetic and potential energy PDEs. Furthermore, a novel robust term is incorporated into the control law to compensate for a specific class of disturbances that has not been addressed within the existing IDA-PBC literature. The effectiveness of the proposed method is validated through numerical simulations, demonstrating satisfactory control performance. 
\end{abstract}

\maketitle
\section{INTRODUCTION}\label{s1}
Stabilization of underactuated robots (UaRs) has attracted considerable attention due to the inherent challenge arising from the mismatch between the number of actuators and the system degrees of freedom. Over the past two decades, numerous control strategies have been proposed to address this problem. However, a large portion of the existing methods are inherently case-specific and rely heavily on the particular structure or dynamics of the system under consideration, which limits their general applicability.
Among the available control frameworks, interconnection and damping assignment passivity-based control (IDA-PBC) has emerged as a systematic and general approach for stabilizing underactuated robotic systems through total energy shaping. By appropriately modifying the system’s kinetic and potential energy functions, IDA-PBC enables the stabilization of unstable equilibria while preserving the underlying physical structure of the system. Although various formulations of IDA-PBC have been successfully applied to several benchmark UaRs reported in the literature, relatively little attention has been devoted to the rotary inverted pendulum (RIP), and the validity and applicability of the existing results in this context remain subject to further investigation.

Potential energy shaping has proven to be a successful approach for stabilizing fully actuated mechanical systems. However, due to its inherent inability to stabilize UaRs, the concept of total energy shaping was introduced, in which both the kinetic and potential energies of the system are modified in the closed-loop dynamics through controller design.
Within this framework, IDA-PBC provides a systematic methodology for assigning the desired Hamiltonian function, as well as the interconnection and damping matrices of the system. The realization of this objective relies on analytically solving a set of partial differential equations (PDEs) associated with kinetic and potential energy shaping. These equations, commonly referred to as matching equations, constitute the main bottleneck of the IDA-PBC approach and significantly restrict its practical applicability.
To overcome this limitation, several studies have been reported in the literature with the aim of simplifying the matching equations or eliminating the need for their explicit solution for specific classes of UaRs.

In \cite{acosta2005interconnection}, the authors focused on UaRs with underactuated degrees one. For a specific class of UaRs satisfying a set of structural assumptions, the kinetic and potential energy shaping PDEs were reduced to ordinary differential equations (ODEs). The simplification of the kinetic energy PDE through a coordinate transformation was investigated in \cite{viola2007total}, where it was shown that, under a certain condition, the forcing term can be eliminated. In \cite{donaire2015shaping}, it was demonstrated that total energy shaping can be achieved without explicitly solving the matching equations, provided that specific conditions are satisfied. The work in \cite{harandi2021matching} concentrated on the kinetic energy PDE and showed that, by imposing a particular structure on the desired inertia matrix, the PDE can be significantly simplified or reduced to an ODE. Other representative studies addressing the matching equations within the total energy shaping framework include \cite{acosta2009pdes, harandi2022solution, borja2015shaping}; see \cite{harandi2021passivity} for a comprehensive review.
Despite these contributions, the applicability of the aforementioned approaches to RIP remains limited, primarily due to the restrictive assumptions and structural requirements imposed by these methods. Hence, to the best of our knowledge, none of the existing general approaches is capable of shaping the energy of RIP; see Remark~\ref{re1} for more details.
 In \cite{ryalat2013ida}, the authors focused on RIP and claimed that the kinetic energy shaping PDE associated with RIP reduces to an algebraic equation and subsequently designed an IDA-PBC scheme. However, a closer examination reveals that the resulting equation is, in fact, an ODE, and the proposed desired inertia matrix does not satisfy this equation. Consequently, the validity of the reported results is not fully established; see Remark~\ref{re2} for further discussion.

On the other hand, robust extensions of IDA-PBC have attracted increasing attention in recent years. In \cite{donaire2017robust}, a constant matched disturbance was rejected by introducing an integral action into the closed-loop system through. The rejection of external disturbances modeled as the output of a port-Hamiltonian system via the design of an estimator was investigated in \cite{ferguson2020matched}. Moreover, the rejection of a class of matched disturbances that are position-dependent with unknown constant parameters has been addressed in \cite{franco2025integral}. In these works, closed-loop stability is established under the assumption that the disturbance is integrable, that is, it can be expressed as the gradient of a scalar function, which inherently restricts the range of admissible disturbances and consequently, limits the applicability of the proposed approaches.

In this paper, the stabilization of the upward equilibrium of a rotary inverted pendulum is addressed within the total energy shaping framework. By invoking the approach presented in~\cite{sandoval2008interconnection}, the matching equation associated with kinetic energy shaping is structurally simplified, and an appropriate analytical solution is derived that ensures the positive definiteness of the Hessian matrix of the desired potential energy. Subsequently, by solving the remaining PDE related to potential energy shaping, it is shown that the region of attraction of the closed-loop system can be enlarged to approach the largest feasible set through a suitable selection of the controller parameters. In addition, the existing works related to energy shaping control of the rotary inverted pendulum are thoroughly analyzed and discussed.
Furthermore, a robust term is incorporated into the control law to reject a class of matched disturbances. Specifically, a nonintegrable position-dependent disturbance with unknown constant parameters, expressed in a regressor form, is eliminated from the closed-loop dynamics, and stability is guaranteed using Lyapunov theory. The effectiveness of the proposed control strategy is validated through numerical simulations.   

For clarity, the main contributions of this paper are summarized as follows:

\begin{itemize}
	\item A rigorous analytical solution to the matching equations is derived. 
	The PDEs associated with both kinetic and potential energy shaping are solved analytically, ensuring exact matching of the unactuated dynamics in the open- and closed-loop systems, which is essential for the performance of IDA-PBC.
	
	\item A comprehensive analysis of the existing literature on energy shaping control of the RIP is provided. 
	It is shown that, in prior works, the matching equations are not consistently satisfied, which may lead to violations of closed-loop stability. This analysis highlights the necessity of a precise total energy shaping design for the RIP.
	
	\item A novel disturbance rejection mechanism is developed within the IDA-PBC framework. 
	The proposed approach compensates for a class of nonintegrable matched disturbances with unknown constant parameters, expressed in regressor form, and guarantees closed-loop stability using Lyapunov theory.
\end{itemize}
\section{IDA-PBC framework for UaRs}\label{sec:21}

This part briefly reviews the IDA-PBC framework for UaRs, following the developments reported in~\cite{harandi2021matching,acosta2005interconnection}. 
The dynamics of a simple mechanical system described within the port-Hamiltonian formalism can be expressed as
\begin{equation}
	\label{1}
	\begin{bmatrix}
		\dot{q} \\ \dot{p}
	\end{bmatrix}
	=
	\begin{bmatrix}
		0_{n\times n} & I_n \\
		- I_n & 0_{n\times n}
	\end{bmatrix}
	\begin{bmatrix}
		\nabla_q H \\ \nabla_p H
	\end{bmatrix}
	+
	\begin{bmatrix}
		0_{n\times m} \\ G
	\end{bmatrix}
	u,
\end{equation}
where $q,p \in \mathbb{R}^n$ denote the generalized coordinates and momenta, respectively, with $p = M(q)\dot{q}$. 
The Hamiltonian function is given by
\[
H(q,p) = V(q) + \frac{1}{2} p^\top M^{-1}(q) p,
\]
which represents the total energy of the system composed of potential energy $V(q)$ and kinetic energy. 
The inertia matrix $M(q)$ is symmetric and positive definite, and the input distribution matrix is denoted by $G(q)\in\mathbb{R}^{n\times m}$, where $m<n$ indicates underactuation.

Within the IDA-PBC methodology, the control input is selected as
\begin{align}
	u &= (G^\top G)^{-1}G^\top\big(\nabla_q H - M_d M^{-1}\nabla_q H_d + J_2\nabla_p H_d\big) \nonumber\\
	&\quad - K_v G^\top \nabla_p H_d,
	\label{4}
\end{align}
where the desired Hamiltonian is defined by
\[
H_d(q,p) = \frac{1}{2} p^\top M_d^{-1}(q)p + V_d(q).
\]
The desired inertia matrix $M_d(q)$ and the shaped potential energy $V_d(q)$ are obtained by enforcing the so-called matching equations associated with kinetic and potential energy shaping, namely,
\begin{align}
	&G^\bot \Big\{\nabla_q \big(p^\top M^{-1}(q)p\big)
	- M_d M^{-1}\nabla_q \big(p^\top M_d^{-1}(q)p\big) \nonumber\\
	&\qquad\quad + 2 J_2 M_d^{-1} p \Big\} = 0_{(n-m)\times 1},
	\label{5}\\
	&G^\bot \{\nabla_q V - M_d M^{-1}\nabla_q V_d\}
	= 0_{(n-m)\times 1}.
	\label{6}
\end{align}

As a result, the closed-loop dynamics can be written in the compact form
\begin{equation}
	\label{2}
	\begin{bmatrix}
		\dot{q} \\ \dot{p}
	\end{bmatrix}
	=
	\begin{bmatrix}
		0_{n\times n} & M^{-1}M_d \\
		- M_d M^{-1} & J_2 - G K_v G^\top
	\end{bmatrix}
	\begin{bmatrix}
		\nabla_q H_d \\ \nabla_p H_d
	\end{bmatrix}.
\end{equation}
Here, $J_2(q,p)\in\mathbb{R}^{n\times n}$ is a freely assignable skew-symmetric matrix, and $K_v\in\mathbb{R}^{m\times m}$ is a positive definite damping gain. 
The derivation of~\eqref{2} follows directly by premultiplying the momentum dynamics in~\eqref{1} with $[G, G^{\bot^\top}]^\top$, substituting the control law~\eqref{4}, and invoking the matching equations~\eqref{5}–\eqref{6}; see~\cite{ortega2002stabilization} for a detailed discussion.

The desired equilibrium point $q^\ast$ lies on the manifold defined by $G^\bot \nabla_q V(q^\ast)=0$ and is required to be a strict minimum of the shaped potential energy $V_d(q)$. 
Under this condition, stability of the equilibrium is ensured since the time derivative of the desired Hamiltonian satisfies
\begin{align}
	\dot{H}_d = -(\nabla_p H_d)^\top G K_v G^\top \nabla_p H_d \leq 0.
	\label{vdot}
\end{align}
\begin{figure}[b]
	\centering
	\includegraphics[scale=.3]{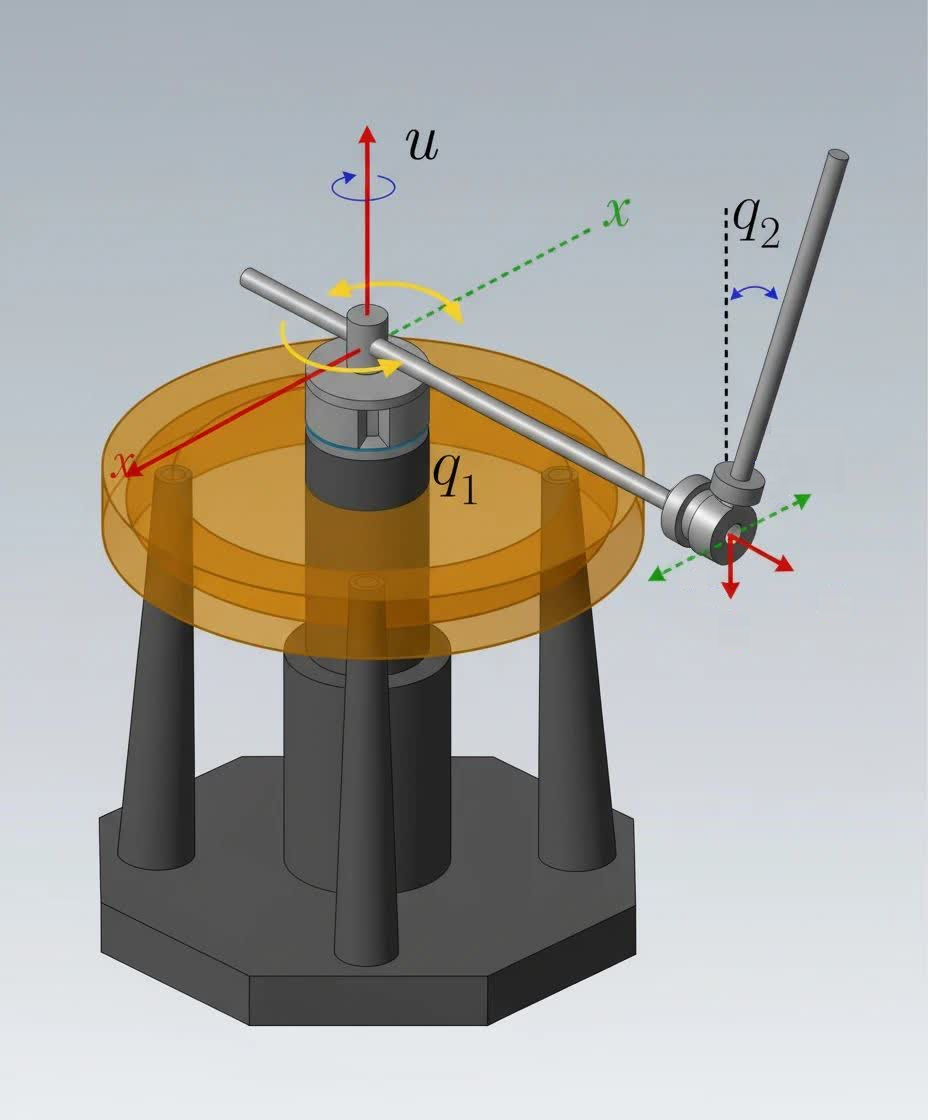}
	\caption
	{Schematic of the rotary inverted pendulum.}
	\label{sh1}
\end{figure}

\section{Total Energy Shaping Controller Design for the RIP}

The rotary inverted pendulum  is a spatial robotic system consisting of two revolute joints, in which only the first joint is actuated. 
A schematic representation of the RIP is depicted in Fig.~\ref{sh1}. 
The dynamics of the RIP can be expressed in the form of~\eqref{1} with the following system parameters~\cite{hernandez2024modeling}:
\begin{align}\label{dyn}
	M &=
	\begin{bmatrix}
		p_1 + p_2 \sin^2(q_2) & p_3 \cos(q_2) \\
		p_3 \cos(q_2) & p_4
	\end{bmatrix}, \qquad
	G =
	\begin{bmatrix}
		1 \\ 0
	\end{bmatrix}, \\
	V &= p_5 \cos(q_2),
\end{align}
where the parameters are defined as
\begin{align*}
	p_1 &= I_1 + m_1 l_1^2, \qquad
	p_2 = m_2 l_2^2, \qquad
	p_3 = m_2 l_1 l_2, \\
	p_4 &= I_2 + m_2 l_2^2, \qquad
	p_5 = m_2 l_2 g,
\end{align*}
in which $I_i$, $l_i$, and $m_i$ denote the moment of inertia, length, and mass of the $i$th link, respectively. The aim is to stabilize $q^\ast=[0,0]^\top$ via IDA-PBC. The results are presented in the following theorem.

\begin{theorem}\label{th1}
	Consider the rotary inverted pendulum whose dynamics are described by \eqref{1} and \eqref{dyn}. To stabilize the equilibrium point $q^\ast$, a total energy shaping controller of the form \eqref{4} is applied. The controller parameters are chosen as follows
	\begin{align}
		M_d(q_2) &=
		\begin{bmatrix}
			d_1 & d_2 \\
			d_3 & d_4
		\end{bmatrix}, \qquad
		J_2 =
		\begin{bmatrix}
			0 & \tilde{p}^\top \alpha(q_2) \\
			-\tilde{p}^\top \alpha(q_2) & 0
		\end{bmatrix}, \nonumber\\
		\tilde{p} &= M_d^{-1} p, \qquad
		\alpha(q_2) = [\alpha_1(q_2), \alpha_2(q_2)]^\top,
		\label{j2}
	\end{align}
	where
	\begin{align}
		d_1 &= k_2, \nonumber\\
		d_2 = d_3 &=
		\cos(q_2)
		\frac{p_1 + p_2 \sin^2(q_2)}
		{k_1 + \dfrac{p_2}{p_3 \psi_{4_0}} \sin^2(q_2)}
		- p_3 \psi_{4_0} \cos(q_2), \nonumber\\
		d_4 &=
		\frac{p_3 \cos^2(q_2)}
		{k_1 + \dfrac{p_2}{p_3 \psi_{4_0}} \sin^2(q_2)}
		- p_4 \psi_{4_0}.
		\label{d}
	\end{align}
	The desired potential energy function is given by
	\begin{align}
		V_d &=
		\frac{\kappa}{2}
		\bigg(
		q_1
		+
		\sqrt{\frac{p_3}{k_1 p_2 \psi_{4_0}}}
		\arctan
		\Big(
		\sqrt{\frac{p_2}{k_1 p_3 \psi_{4_0}}}
		\, \sin(q_2)
		\Big)
		\bigg)^2
		\nonumber\\
		&\quad
		- \frac{p_5}{\psi_{4_0}} \cos(q_2).
		\label{vd}
	\end{align}
	Moreover, the components of $\alpha(q_2)$ are defined as
	\begin{align}
		2\alpha_1& =
		- 2 p_2 \psi_1^2 \sin(q_2) \cos(q_2)
		+ 2 p_3 \psi_1 \psi_2 \sin(q_2)
		\nonumber\\
		&
		+ \psi_4 (p_1 + p_2 \sin^2(q_2)) \frac{d\psi_1}{dq_2}
		- p_3 \psi_4 \psi_2 \sin(q_2)
		\nonumber\\
		&
		+ 2 p_2 \psi_4 \psi_1 \sin(q_2) \cos(q_2)
		+ p_3 \psi_4 \cos(q_2) \frac{d\psi_2}{dq_2},
		\label{alpha}
		\\
		\alpha_2 &=
		p_3 \psi_2 \psi_3 \sin(q_2)
		- 2 p_2 \psi_1 \psi_3 \sin(q_2) \cos(q_2)
		\nonumber\\
		& 	+ p_3 \psi_1 \psi_4 \sin(q_2)
		+ p_3 \psi_4 \cos(q_2) \frac{d\psi_1}{dq_2}
		+ p_4 \psi_4 \frac{d\psi_2}{dq_2}\nonumber\\
		& - p_3 \psi_4 \psi_1 \sin(q_2).
		\nonumber
	\end{align}
	Here, $\psi_{4_0}$, $k_1$, $k_2$, and $\kappa$ are positive constants, and
	\begin{align}\label{psi}
		M_d M^{-1} = \Psi(q_2) =
		\begin{bmatrix}
			\psi_1(q_2) & \psi_2(q_2) \\
			\psi_3(q_2) & \psi_4(q_2)
		\end{bmatrix}.
	\end{align}
\end{theorem}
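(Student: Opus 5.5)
Since the theorem asserts that \eqref{4} with \eqref{j2}--\eqref{psi} stabilizes $q^\ast$, I will check three things. First, the kinetic matching equation \eqref{5} must hold. Second, the potential matching equation \eqref{6} must hold. Third, $q^\ast$ must be a strict minimum of $V_d$ and $M_d$ must be positive definite near $q^\ast$. Once these hold, the closed loop takes the form \eqref{2}, and $H_d$ is a Lyapunov function by \eqref{vdot}. Asymptotic stability then follows from LaSalle, using the standard detectability argument. On the set where $G^\top\nabla_p H_d=\tilde p_1\equiv 0$, the $\dot p_1$ row of \eqref{2} together with \eqref{psi} and the structure of $V_d$ should force $\tilde p_2\equiv0$ and $\nabla V_d=0$, hence $q=q^\ast$.

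\textbf{Kinetic matching.} Here $G^\bot=[0\;1]$, and I will follow the device of \cite{sandoval2008interconnection} and rewrite \eqref{5} in terms of $\tilde p=M_d^{-1}p$. The identities $\nabla_q(p^\top M^{-1}p)=-\tilde p^\top M_d\,\partial_{q}(M^{-1})M_d\tilde p$ and $M_dM^{-1}\nabla_q(p^\top M_d^{-1}p)=-\Psi\,\tilde p^\top\partial_q M_d\,\tilde p$ turn the second row of \eqref{5} into a quadratic form in $\tilde p$, with only the $q_2$-derivative nonzero. The $J_2$ term contributes $-2\tilde p^\top\alpha\,\tilde p_1$. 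Matching the coefficients of $\tilde p_1^2$, $\tilde p_1\tilde p_2$ and $\tilde p_2^2$ gives three scalar equations. The $\tilde p_1^2$ and $\tilde p_1\tilde p_2$ equations are algebraic in $\alpha_1,\alpha_2$ and, after expanding with $\Psi=M_dM^{-1}$, reproduce the stated expressions \eqref{alpha}. The $\tilde p_2^2$ coefficient cannot be absorbed by $J_2$ and gives a genuine ODE in $q_2$ linking $\psi_3,\psi_4$ and their derivatives. I expect this to be the main obstacle. My plan is to take $\psi_4\equiv\psi_{4_0}$ constant, which reduces it to a first-order linear ODE for $d_4$ (equivalently $\psi_3$). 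I will then verify by direct differentiation that the given $d_2$ and $d_4$, with the common denominator $k_1+\frac{p_2}{p_3\psi_{4_0}}\sin^2q_2$, satisfy it, carrying the sign convention $\psi_4=-\psi_{4_0}$ implied by $d_4$.

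\textbf{Potential matching.} With the second row $[\psi_3\;\psi_4]$ of $\Psi$, \eqref{6} reads $-p_5\sin q_2=\psi_3\partial_{q_1}V_d+\psi_4\partial_{q_2}V_d$. This is a first-order linear PDE. Its homogeneous solutions are arbitrary functions of $z=q_1+\int(-\psi_4/\psi_3)\,dq_2$, and computing $\psi_3$ from $M_dM^{-1}$ should give exactly the arctan term in \eqref{vd}. A particular solution is $-\frac{p_5}{\psi_{4_0}}\cos q_2$, in the sign convention above, and choosing $\Phi(z)=\frac\kappa2 z^2$ yields \eqref{vd}.

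\textbf{Minimum and definiteness.} I will compute the Hessian of $V_d$ at $0$. Let $c=\sqrt{p_3/(k_1p_2\psi_{4_0})}\sqrt{p_2/(k_1p_3\psi_{4_0})}=1/(k_1\psi_{4_0})$. The Hessian is then $\kappa\,[1\;c]^\top[1\;c]+\mathrm{diag}(0,p_5/\psi_{4_0})$, which is positive definite because $\kappa,p_5,\psi_{4_0}>0$. Finally, I will check that $M_d(0)>0$ for $k_2$ large, possible because $d_1=k_2$ is free, given the sign of $d_4(0)$ under a suitable choice of $k_1$ and $\psi_{4_0}$. By continuity this gives local positive definiteness of $M_d$, which completes the argument.
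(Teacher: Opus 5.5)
Your plan follows the paper's route. You rewrite \eqref{5} through $\Psi=M_dM^{-1}$ as a quadratic form in $\tilde p$, let $\alpha_1,\alpha_2$ absorb the $\tilde p_1^2$ and $\tilde p_1\tilde p_2$ coefficients, solve the remaining $\tilde p_2^2$ equation with $\psi_4=-\psi_{4_0}$, and solve \eqref{6} by characteristics with a quadratic $\Phi$. Checking the given formulas rather than deriving them is legitimate. Your explicit Hessian and $M_d(0)>0$ checks make precise what the paper leaves implicit. This includes the requirement $k_1\psi_{4_0}<p_3/p_4$ for $d_4(0)=p_3/k_1-p_4\psi_{4_0}>0$, which the theorem statement omits and the paper uses only tacitly through the definition of $\rho$.

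However, two of your formulas are wrong as written, and each would make the corresponding verification fail.

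First, $\Phi(q_1+g(q_2))$ solves $\psi_3\partial_{q_1}V_d+\psi_4\partial_{q_2}V_d=0$ exactly when $g'=-\psi_3/\psi_4$, not $-\psi_4/\psi_3$. With $\psi_4=-\psi_{4_0}$ and $\psi_3=\cos q_2/(k_1+\frac{p_2}{p_3\psi_{4_0}}\sin^2 q_2)$, the correct $g'=\psi_3/\psi_{4_0}$ integrates to the arctan term in \eqref{vd}. Your integrand $\psi_{4_0}(k_1+\frac{p_2}{p_3\psi_{4_0}}\sin^2 q_2)/\cos q_2$ instead produces a $\ln|\sec q_2+\tan q_2|$ term.

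Second, the correct identity is $\nabla_q(p^\top M^{-1}p)=+\tilde p^\top M_d\,\partial_q(M^{-1})\,M_d\tilde p=-\tilde p^\top\Psi\,\partial_q M\,\Psi^\top\tilde p$. With your extra minus sign, the forcing term in the $\tilde p_2^2$ equation flips sign, and the given $\psi_3$ no longer satisfies it.

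Relatedly, for constant $\psi_4$ that equation is Riccati in $\psi_3$ (and in $d_4$), not linear. It becomes linear only for $\eta=1/\psi_3$, which is how the paper integrates it; this is harmless since you verify by differentiation.

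Finally, the LaSalle step goes beyond the paper, which establishes only Lyapunov stability from \eqref{vdot} and the strict minimum of $H_d$. ``Should force $\tilde p_2\equiv 0$'' is not an argument. Either drop the asymptotic claim or carry out the invariant-set analysis. On $\tilde p_1\equiv 0$ one finds that $z=q_1+g(q_2)$ and $\tfrac12 d_4\tilde p_2^2-\tfrac{p_5}{\psi_{4_0}}\cos q_2$ are both constant, and non-equilibrium motions on this set still have to be ruled out.
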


\proof
	To shape the energy of the RIP, we first solve the PDE given in \eqref{5}. 
	To simplify this equation, a procedure similar to that in \cite{sandoval2008interconnection} is adopted. 
	Since the inertia matrix depends only on $q_2$, it is convenient to select $M_d$ and $J_2$ to be independent of $q_1$. 
	Using the identity $M M^{-1} = I_2$, differentiation with respect to $q_2$ yields
	\begin{equation*}
		\frac{d M^{-1}}{d q_2}
		=
		- M^{-1} \frac{d M}{d q_2} M^{-1}.
	\end{equation*}
	Moreover, considering $J_2$ as defined in \eqref{j2}, equation \eqref{5} can be written as
	\begin{align*}
		&- p^\top M^{-1} \frac{d M}{d q_2} M^{-1} p
		+ G^\top M_d M^{-1}
		\begin{bmatrix}
			0 \\ p^\top M_d^{-1} \dfrac{d M_d}{d q_2} M_d^{-1} p
		\end{bmatrix} \\
		&\qquad
		- 2
		\begin{bmatrix}
			\tilde{p}^\top \alpha(q_2) & 0
		\end{bmatrix}
		M_d^{-1} p
		= 0 .
	\end{align*}
	By rearranging the last two terms, the above equation can be equivalently expressed as
	\begin{align*}
		p^\top \Bigg(
		&- M^{-1} \frac{d M}{d q_2} M^{-1}
		+ [0,1] M_d M^{-1} [0,1]^\top
		M_d^{-1} \frac{d M_d}{d q_2} M_d^{-1} \\
		&\quad
		- 2 M_d^{-1}
		\begin{bmatrix}
			\alpha_1 & 0 \\
			\alpha_2 & 0
		\end{bmatrix}
		M_d^{-1}
		\Bigg) p
		= 0 .
	\end{align*}
	Eliminating $p$, taking the symmetric part of the last term, and premultiplying and postmultiplying by $M_d$, the resulting equation is obtained as
	\begin{align}\label{kin}
		- \Psi \frac{d M}{d q_2} \Psi^\top
		+ \big( [0,1] \Psi [0,1]^\top \big)
		\frac{d M_d}{d q_2}
		-
		\begin{bmatrix}
			2\alpha_1 & \alpha_2 \\
			\alpha_2 & 0
		\end{bmatrix}
		= 0,
	\end{align}
	where \eqref{psi} has been used.
   Replacing \eqref{dyn} into \eqref{kin} yields the following set of equations:
\begin{subequations} \begin{align} &2p_3\psi_1\psi_2\sin(q_2)-2p_2\psi_1^2\sin(q_2)\cos(q_2)+\psi_4\frac{d}{dq_2}\Big(\psi_1\big(p_1\nonumber\\&\hspace{1.cm}+p_2\sin^2(q_2)\big)+p_3\psi_2\cos(q_2)\Big)-2\alpha_1=0,\label{al1}\\& p_3\sin(q_2)(\psi_2\psi_3+\psi_1\psi_4)-2p_2\psi_1\psi_3\sin(q_2)\cos(q_2)\nonumber\\&\hspace{1.1cm}+\psi_4\frac{d}{dq_2}\big(p_3\psi_1\cos(q_2)+p_4\psi_2\big)-\alpha_2=0,\label{al2}\\& -2 p_2 \psi_3^2\sin(q_2) \cos(q_2)+2 p_3\psi_3\psi_4 \sin(q_2)\nonumber\\&\hspace{1.8cm}+\psi_4\frac{d}{dq_2}\big(p_3\psi_3\cos(q_2)+p_4\psi_4\big)=0.\label{ode} \end{align} \end{subequations}
   Since the first two equations contain the free parameters $\alpha_1$ and $\alpha_2$, it suffices to solve only the last equation. 
   Note that \eqref{ode} is an ordinary differential equation with two unknown functions $\psi_3$ and $\psi_4$, and therefore admits infinitely many solutions. 
   To uniquely determine $\psi_i$s, the following conditions are imposed:
   
   \begin{enumerate}
   	\item[(C1)] $M_d$ must be positive definite at least in a neighborhood of $q^\ast$.
   	\item[(C2)] The solution of \eqref{6}, given by
   	\begin{align}\label{pot}
   		- p_5 \sin(q_2)
   		=
   		\psi_3(q_2) \frac{\partial V_d}{\partial q_1}
   		+
   		\psi_4(q_2) \frac{\partial V_d}{\partial q_2},
   	\end{align}
   	must satisfy
   	\begin{align}\label{cn}
   		\left.
   		\frac{\partial^2 V_d}{\partial q^2}
   		\right|_{q=q^\ast}
   		> 0 .
   	\end{align}
   \end{enumerate}
   
   From \eqref{psi}, the elements of $M_d$ are obtained as
   \begin{subequations}\label{eq:main}
   	\begin{align}
   		d_1 &=
   		\psi_1 (p_1 + p_2 \sin^2(q_2))
   		+ p_3 \psi_2 \cos(q_2),
   		\label{eq:a}
   		\\
   		d_2 &=
   		p_3 \psi_1 \cos(q_2)
   		+ p_4 \psi_2,
   		\label{eq:b}
   		\\
   		d_3 &=
   		\psi_3 (p_1 + p_2 \sin^2(q_2))
   		+ p_3 \psi_4 \cos(q_2),
   		\label{eq:c}
   		\\
   		d_4 &=
   		p_3 \psi_3 \cos(q_2)
   		+ p_4 \psi_4.
   		\label{eq:d}
   	\end{align}
   \end{subequations}
   A necessary condition for (C1) is $d_4 > 0$. 
   On the other hand, to satisfy \eqref{cn}, the Hessian matrix of the nonhomogeneous solution $V_{dn}$ which is as follows
   \begin{align*}
   	V_{dn}
   	=
   	\int
   	\frac{- p_5 \sin(q_2)}{\psi_4(q_2)}
   	\, dq_2,
   \end{align*}
   must be positive definite at $q = q^\ast$ \cite{harandi2022solution}. 
   Accordingly, choosing
   \begin{align}\label{psi4}
   	\psi_4 = - \psi_{4_0},
   \end{align}
   with $\psi_{4_0} > 0$ guarantees
   $\nabla_q^2 V_{dn} |_{q = q^\ast} > 0$.
   Substituting \eqref{psi4} into \eqref{ode} yields
   \begin{align}
   	\psi_3'
   	=
   	- \tan(q_2)\, \psi_3
   	- \frac{2 p_2}{p_3 \psi_{4_0}} \sin(q_2) \psi_3^2,
   	\label{r}
   \end{align}
   where $\psi_3' = d\psi_3/dq_2$. 
   To solve the Riccati equation \eqref{r}, consider the substitution
   \begin{align*}
   	\psi_3(q_2) = \frac{1}{\eta(q_2)},
   \end{align*}
   which leads to
   \begin{align*}
   	\eta'
   	=
   	\eta \tan(q_2)
   	+ \frac{2 p_2}{p_3 \psi_{4_0}} \sin(q_2).
   \end{align*} 
   Multiplying both sides by the integrating factor
   \begin{align*}
   	\mu(q_2)
   	=
   	e^{\int -\tan(q_2) \, dq_2}
   	=
   	\cos(q_2),
   \end{align*}
   we obtain
   \begin{align*}
   	\big( \cos(q_2)\, \eta \big)'
   	=
   	\frac{2 p_2}{p_3 \psi_{4_0}}
   	\sin(q_2)\cos(q_2).
   \end{align*}
   Hence,
   \begin{align*}
   	\eta(q_2)
   	=
   	\frac{k_1 + \dfrac{p_2}{p_3 \psi_{4_0}} \sin^2(q_2)}
   	{\cos(q_2)},
   \end{align*}
   where $k_1 > 0$ is an integration constant. 
   Substituting back yields
   \begin{align}\label{psi3}
   	\psi_3(q_2)
   	=
   	\frac{\cos(q_2)}
   	{k_1 + \dfrac{p_2}{p_3 \psi_{4_0}} \sin^2(q_2)}.
   \end{align}
   Substituting \eqref{psi3} and \eqref{psi4} into \eqref{eq:c} and \eqref{eq:d} gives $d_3$ and $d_4$ as in \eqref{d}. 
   Next, the free parameters $\psi_1$ and $\psi_2$ are designed. 
   Imposing the conditions $d_2 = d_3$ and $d_1 > 0$, we select
   \begin{align*}
   	d_1 = k_2, \qquad k_2 > 0 .
   \end{align*}
   
   From \eqref{eq:a} and \eqref{eq:b}, it follows that
   \begin{align*}
   	\psi_1(q_2)
   	&=
   	\frac{
   		p_4 d_1 - p_3 \cos(q_2)\, d_2
   	}{\det(M)},
   	\\
   	\psi_2(q_2)
   	&=
   	\frac{
   		- p_3 \cos(q_2)\, d_1
   		+ (p_1 + p_2 \sin^2(q_2)) d_2
   	}{\det(M)}.
   \end{align*}
   Finally, $\alpha_1$ and $\alpha_2$ are obtained from \eqref{al1} and \eqref{al2}, respectively, as given in \eqref{alpha}. 
   
   The PDE \eqref{6} is a first-order linear PDE whose solution can be obtained by standard methods as follows
   \begin{align*}
   	V_d
   	&=
   	\Phi
   	\bigg(
   	q_1
   	+
   	\sqrt{\frac{p_3}{k_1 p_2 \psi_{4_0}}}
   	\arctan
   	\Big(
   	\sqrt{\frac{p_2}{k_1 p_3 \psi_{4_0}}}
   	\, \sin(q_2)
   	\Big)
   	\bigg)\\&
   	-
   	\frac{p_5}{\psi_{4_0}} \cos(q_2),
   \end{align*}
   where $\Phi(\cdot)$ is an arbitrary function. 
   To satisfy \eqref{cn}, $V_d$ is chosen as in \eqref{vd} with $\kappa > 0$, which completes the proof.
   
   \hfill$\square$

Next, we determine the region in which the desired inertia matrix $M_d$ is positive definite.
After straightforward calculations, it can be shown that $d_4 > 0$ holds provided that
\begin{align}
	q_2 \in [-\rho, \, \rho], \qquad
	\rho
	=
	\arccos
	\sqrt{
		\frac{
			p_4 \big( \psi_{4_0} k_1 + \tfrac{p_2}{p_3} \big)
		}{
			p_3 + \tfrac{p_2 p_4}{p_3}
		}
	}.\label{rho}
\end{align}
This result implies that $d_4$ remains positive in a neighborhood of the equilibrium point $q^\ast$.
Moreover, the admissible region enlarges as the product $\psi_{4_0} k_1$ decreases.
Finally, by choosing $k_2$ sufficiently large, the interval
$q_2 \in [-\rho, \, \rho]$ characterizes the region in which $M_d$ is positive definite.

\begin{remark}\label{re1}
	As stated in Section~\ref{s1}, several studies have been devoted to solving or simplifying the matching equations arising in energy shaping control. 
	However, these approaches are not applicable to the RIP. 
	The method proposed in \cite{acosta2005interconnection} is based on canceling the forcing term, i.e., the first term in \eqref{5}, which does not vanish for the RIP. 
	Moreover, the necessary condition introduced in \cite{viola2007total} for eliminating the forcing term via a coordinate transformation is not satisfied in this case. 
	The systematic approach developed in \cite{harandi2021matching} for solving the kinetic energy PDE results in a Riccati equation that does not admit a closed-form solution for the RIP. 
	In addition, the sufficient conditions proposed in \cite{donaire2015shaping} for energy shaping without explicitly solving the PDEs cannot be fulfilled for this system. 
	Furthermore, due to the specific structure of the inertia matrix, stabilization via PID-PBC is not feasible \cite{romero2018global}. 
Therefore, to the best of our knowledge, the existing general approaches fail to address the energy shaping problem of the RIP, rendering it a particularly challenging benchmark system.
\end{remark}

   \begin{remark}\label{re2}
   	Although general energy shaping methods are not applicable to the RIP, some attempts have been reported to stabilize this system using the IDA-PBC framework. 
   	In \cite{ryalat2013ida}, the desired inertia matrix is selected arbitrarily, and it is claimed that \eqref{5} reduces to an algebraic equation that can be readily solved by an appropriate choice of $J_2$. 
   	However, it has been shown that the forcing term does not vanish for the RIP and, as a consequence, the desired inertia matrix must be obtained by solving an ODE. 
   	This observation casts doubt on the validity of the reported results.
   	Another energy shaping controller for the RIP is presented in \cite{viola2013some}. 
   	In that work, a coordinate transformation is first applied and the resulting ODE is subsequently solved. 
   	In particular, it is claimed that the solution of
   	\begin{equation*}
   		\mathfrak{k}_1 \frac{d m_{22}}{d q_2}
   		=
   		- \sin(2 q_2)\, m_{22}^2
   		- 4\, m_{22}
   		+ \frac{2 \mathfrak{k}_1}{\cos^2 q_2}
   	\end{equation*}
   	is given by
   	\begin{equation*}
   		m_{22}(q_2)
   		=
   		\frac{2 \mathfrak{k}_1}{b^2 \cos^2 q_2}
   		+
   		\frac{\mathfrak{k}_1}{\mathfrak{k}_2^2 + \sin^2 q_2}.
   	\end{equation*}
   	However, direct substitution reveals that the proposed expression for $m_{22}$ does not satisfy the above differential equation. 
   	Therefore, to the best of our knowledge, this paper presents the first precise IDA-PBC design for the RIP.
   \end{remark}
   At the end of this section, it is worth emphasizing that the specific selection of $\psi_4$ plays a key role in overcoming this difficulty. 
   In particular, choosing $\psi_4$ in the proposed form enables the resulting Riccati-type equation to admit a tractable solution, which is not achievable under a general choice of the desired inertia parameters.
   
    \section{Disturbance Rejection}
    In this section, the controller is redesigned to reject a particular class of matched disturbances acting on the RIP. 
    Consider the disturbed dynamics of the system given by
    \begin{equation}
    	\label{dis}
    	\begin{bmatrix}
    		\dot{q} \\ \dot{p}
    	\end{bmatrix}
    	=
    	\begin{bmatrix}
    		0_{2\times 2} & I_2 \\
    		- I_2 & 0_{2\times 2}
    	\end{bmatrix}
    	\begin{bmatrix}
    		\nabla_q H \\ \nabla_p H
    	\end{bmatrix}
    	+
    	\begin{bmatrix}
    		0_{2\times 1} \\ \begin{matrix}
    			1 \\ 0
    		\end{matrix}
    	\end{bmatrix}
    	(u-d),
    \end{equation}
    where $H$ was defined in~(\ref{dyn}), and the disturbance $d\in\mathbb{R}$ is assumed to be of the form
    \begin{align*}
    	d = f^\top(q,p)\theta,
    \end{align*}
    with $f\in\mathbb{R}^\ell$ known and $\theta\in\mathbb{R}^\ell$ an unknown constant parameter vector.
    The robust energy shaping controller is given in the following lemma.
    
    \begin{lem}\label{l1}
    	Consider the rotary inverted pendulum with dynamics given in~(\ref{dis}). 
    	By applying
    	\begin{subequations}
    		\begin{align}
    			u &= u_{th1}+f^\top \hat{\theta},\label{u}\\
    			\dot{\hat{\theta}}&=-\tilde{p}_1\Gamma^{-1}f,\label{hat}
    		\end{align}
    	\end{subequations}
    	where $u_{th1}$ denotes the controller proposed in Theorem~\ref{th1} and $\Gamma\in\mathbb{R}^{\ell\times \ell}$ is a positive definite gain matrix, the equilibrium point $q^\ast$ is stabilized.
    \end{lem}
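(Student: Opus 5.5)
The argument is a standard adaptive Lyapunov construction built on the closed loop of Theorem~\ref{th1}. Define the estimation error $\tilde{\theta}=\hat{\theta}-\theta$. Substituting \eqref{u} into \eqref{dis}, the input channel receives $u-d=u_{th1}+f^\top\hat{\theta}-f^\top\theta=u_{th1}+f^\top\tilde{\theta}$. The matching equations \eqref{5}--\eqref{6} are solved exactly by the choices in Theorem~\ref{th1}, and the extra term enters only through $G=[1,0]^\top$. Repeating the derivation of \eqref{2} therefore gives the perturbed closed loop
\begin{equation*}
\begin{bmatrix}\dot q\\ \dot p\end{bmatrix}
=
\begin{bmatrix}
0 & M^{-1}M_d\\
-M_dM^{-1} & J_2-GK_vG^\top
\end{bmatrix}
\begin{bmatrix}\nabla_q H_d\\ \nabla_p H_d\end{bmatrix}
+
\begin{bmatrix}0\\ G\end{bmatrix} f^\top\tilde{\theta}.
\end{equation*}
Since $\nabla_p H_d=M_d^{-1}p=\tilde{p}$, the passive output is $G^\top\nabla_pH_d=\tilde{p}_1$.

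Take the candidate Lyapunov function
\[
W(q,p,\tilde\theta)=H_d(q,p)+\tfrac12\tilde{\theta}^\top\Gamma\tilde{\theta}.
\]
The skew-symmetric interconnection and \eqref{vdot} give $\dot H_d=-K_v\tilde p_1^2+\tilde p_1 f^\top\tilde\theta$. Because $\theta$ is constant, $\dot{\tilde\theta}=\dot{\hat\theta}$, and by \eqref{hat} we get $\tilde\theta^\top\Gamma\dot{\tilde\theta}=-\tilde p_1\tilde\theta^\top f$. The cross terms cancel, leaving $\dot W=-K_v\tilde p_1^2\le 0$.

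Next I would use positive definiteness. By Theorem~\ref{th1}, $V_d$ has a strict minimum at $q^\ast$, and $M_d>0$ for $q_2\in[-\rho,\rho]$ with $k_2$ large, as shown after the proof via \eqref{rho}. Hence $W$ is positive definite about $(q^\ast,0,0)$ locally. Lyapunov's theorem then gives stability of $(q,p,\tilde\theta)=(q^\ast,0,0)$, and in particular of $q^\ast$, together with boundedness of $\hat\theta$. This is what the lemma claims.

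The step I expect to need the most care is any attempt to strengthen this to convergence of $q$ to $q^\ast$ rather than mere stability. With LaSalle/Barbalat one gets $\tilde p_1\to 0$. One then has to argue from the unactuated dynamics and $f^\top\tilde\theta$ that $p\to0$ and $\nabla V_d\to0$, which needs a detectability/persistent-excitation-type condition on $f$. Since the lemma asserts only stabilization, I would stop at the Lyapunov stability conclusion, and at most remark on the invariance argument under such an extra assumption. A minor check is that the skew term $J_2$ and the cancellation in $\dot H_d$ survive the disturbance. They do, because the disturbance is matched, so only the $\tilde p_1 f^\top\tilde\theta$ term appears.
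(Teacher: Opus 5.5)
Your proof is correct and follows the paper's argument. The shared steps are the perturbed closed loop with the matched residual $G f^\top\tilde\theta$, and a Lyapunov function formed by $H_d$ plus a quadratic in $\tilde\theta$, whose cross terms cancel to leave $-K_v\tilde p_1^2$. Your weighting $\frac12\tilde\theta^\top\Gamma\tilde\theta$ is in fact the one consistent with the stated update law \eqref{hat}. The paper's choice $\frac12\tilde\theta^\top\Gamma^{-1}\tilde\theta$ leaves a residual $\tilde p_1\tilde\theta^\top(I-\Gamma^{-2})f$ unless $\Gamma=I$.
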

    \hfill$\square$
    
    \proof
    Applying the control law~(\ref{u}), where $u_{th1}$ is given by~(\ref{4}) with parameters defined in~(\ref{j2})--(\ref{psi}), to~(\ref{dis}), the closed-loop system can be written as
    \begin{align*}
    	\renewcommand{\arraystretch}{1.3}
    	\begin{bmatrix}
    		\dot{q} \\ \dot{p}
    	\end{bmatrix}
    	=
    	\begin{bmatrix}
    		0_{2\times 2} & M^{-1}M_d \\
    		- M_d M^{-1} & J_2 - G K_v G^\top
    	\end{bmatrix}
    	\begin{bmatrix}
    		\nabla_q H_d \\ \nabla_p H_d
    	\end{bmatrix}
    	+
    	\begin{bmatrix}
    		0_{2\times 1} \\ \begin{matrix}
    			f^\top\tilde{\theta} \\ 0
    		\end{matrix} 
    	\end{bmatrix},
    \end{align*}
   
    where $\tilde{\theta}=\hat{\theta}-\theta$, and $H_d$ and $J_2$ are defined in Theorem~\ref{th1}.  
     Consider the Lyapunov candidate function
    \begin{align*}
    	\mathcal{V}=H_d+\frac{1}{2}\tilde{\theta}^\top \Gamma^{-1}\tilde{\theta}.
    \end{align*}
    Using~(\ref{vdot}), its time derivative along the closed-loop trajectories is given by
    \begin{align*}
    	\dot{\mathcal{V}}
    	&=-(\nabla_p H_d)^\top G K_v G^\top \nabla_p H_d
    	+(\nabla_p H_d)^\top [f^\top\tilde{\theta},\,0]^\top
    \\
    &	+\tilde{\theta}^\top\Gamma^{-1}\dot{\tilde{\theta}}=-k_v\tilde{p}_1^2+\tilde{p}_1\tilde{\theta}^\top f
    	+\tilde{\theta}^\top\Gamma^{-1}\dot{\hat{\theta}},
    \end{align*}
    where $\tilde{p}=M_d^{-1}p=\nabla_p H_d$ and the constancy of $\theta$ has been used. 
    Substituting~(\ref{hat}) into the above expression yields
    \begin{align*}
    	\dot{\mathcal{V}}=-k_v\tilde{p}_1^2,
    \end{align*}
    which implies stability of $q^\ast$.
    \hfill$\blacksquare$

    \begin{remark}\label{re3}
    	In Lemma~\ref{l1}, a general class of matched disturbances is rejected by augmenting the control law with a robust term. 
    	In particular, the proposed approach includes the rejection of constant external disturbances, which commonly arise in practical applications. 
    	In this case, one has $f=1$, and the term $f^\top \hat{\theta}$ in~(\ref{u}) effectively acts as an integrator driven by the passive output of the closed-loop system. 
    	Moreover, the disturbance $d$ may also represent uncertainties in the RIP dynamics arising from unknown parameters or unmodeled dynamics.
    	On the other hand, Lemma~\ref{l1} generalizes the results reported in~\cite{franco2025integral}, where it is assumed that the regressor vector $f$ is integrable, that is, there exists a function $\zeta$ such that $\nabla \zeta = f$. 
    	In contrast, no such integrability assumption is required here. 
    	As a result, the controller proposed in Lemma~\ref{l1} is more broadly applicable and enhances robustness in practical implementations.
    \end{remark}
   \begin{figure}[t]
  	\centering
  	\includegraphics[scale=.54]{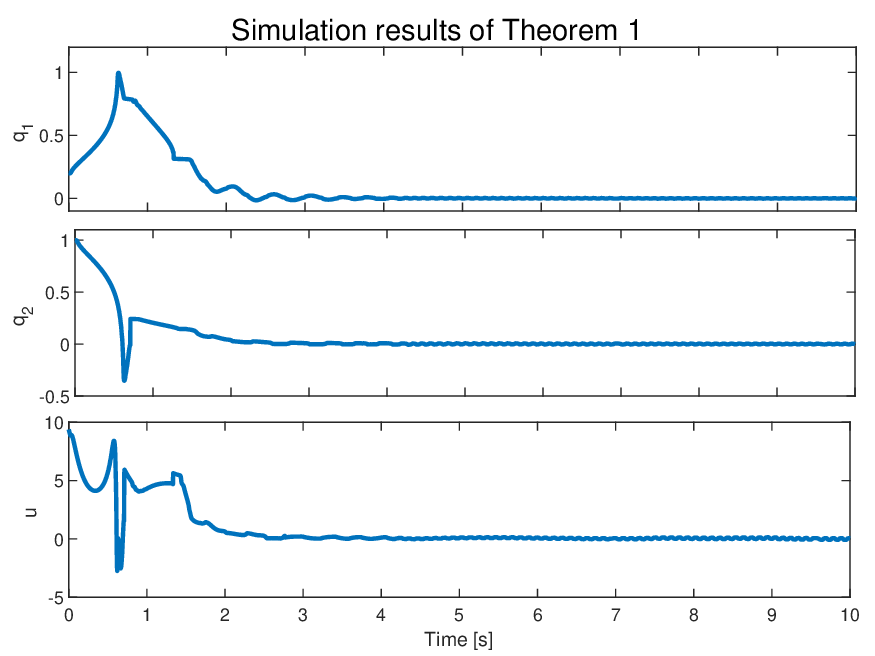}
  	\caption
  	{Simulation results of Theorem~\ref{th1}. The configuration variables converge to $q^\ast$ that means the upright position is stabilized.}
  	\label{sh2}
  \end{figure}   
    \section{Simulation Results}
    To examine the performance of Theorem~\ref{th1} and Lemma~\ref{l1}, they are implemented on RIP via simulation. The parameters of the robot are selected similar to \cite{hernandez2024modeling}. The controller's gains are $\psi_{4_0}=1, k_1=0.1$ and $k_2=100$. By this means, The positive definiteness region of $M_d$ reported in (\ref{rho}) is $\rho\approx 1.1$. Hence, the initial condition is chosen as $q_0=[0.2,1]^\top$ with zero velocity. The results of applying Theorem~\ref{th1} to the system are shown in Fig.~\ref{sh2}. Both of the configuration variables converge to  $q^\ast=[0,0]^\top$ in appropriate time. This approves the assertion of Theorem~\ref{th1} that by applying an energy shaping controller with parameters given in (\ref{j2})-(\ref{alpha}), the upright equilibrium is stabilized.
 
 Furthermore, a matched disturbance is applied to the RIP to test the performance of the proposed robust controller. Now, $q_0=[-0.8,0.8]^\top$, and 
 \begin{align*}
 	d&=0.1+0.1q_1-.3\sin(q_2)\cos(p_1)=f^\top\theta\\&=[1,q_1,\sin(q_2)\cos(p_1)][0.1,0.1,-0.3]^\top.
 \end{align*}
 In this case, both controllers of Theorem~\ref{th1} and Lemma~\ref{l1} are simulated. Fig.~\ref{sh3} shows the results of applying Theorem~\ref{th1}. It is apparent that $q$ does not converge to $q^\ast$ and the error remains in an ultimate bound. This is consistent with \cite{harandi2021robust} that proves matched disturbance leads to ultimate bound in the response of IDA-PBC. The results of Lemma~\ref{l1} are depicted in Fig.~\ref{sh4}. By applying the robust controller, the disturbance is rejected and configuration variables converge to $q^\ast$ that confirm Lemma~\ref{l1}. Note that as illustrated in Fig.~\ref{sh4}, estimation of external disturbance converges to $d$. This is fundamental result of adaptive estimators that under asymptotic stability condition, $f\tilde{\theta}=\hat{d}-d$ converges to zero whereas $\tilde{\theta}$ may be nonzero~\cite{karagiannis2006output}.

     \begin{figure}[t]
    	\centering
    	\includegraphics[scale=.54]{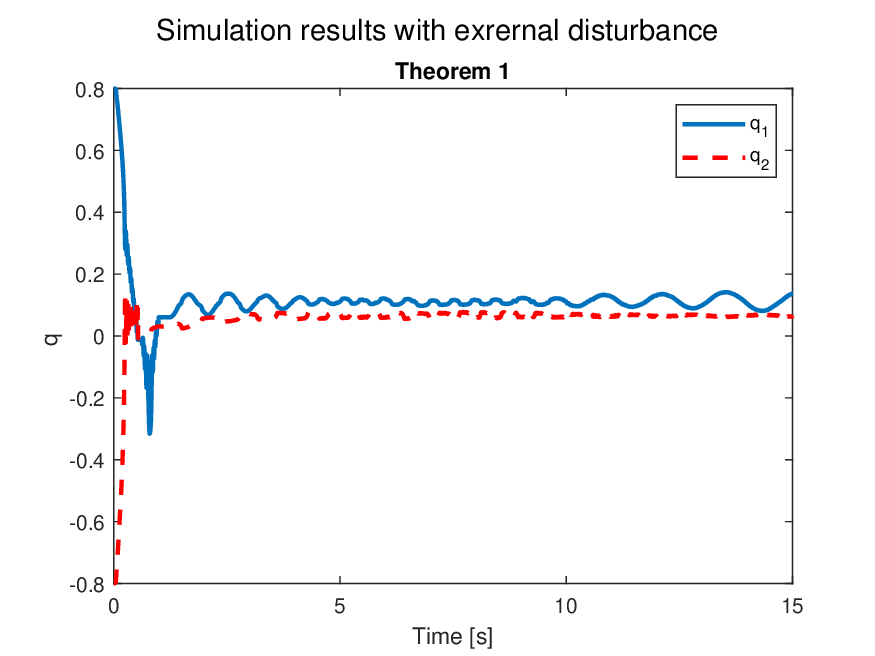}
    	\caption
    	{Simulation results of Theorem~\ref{th1} in the presence of external disturbance. A steady-state error remains in the configuration variables.}
    	\label{sh3}
    \end{figure}  
         \begin{figure}[h]
    	\centering
    	\includegraphics[scale=.54]{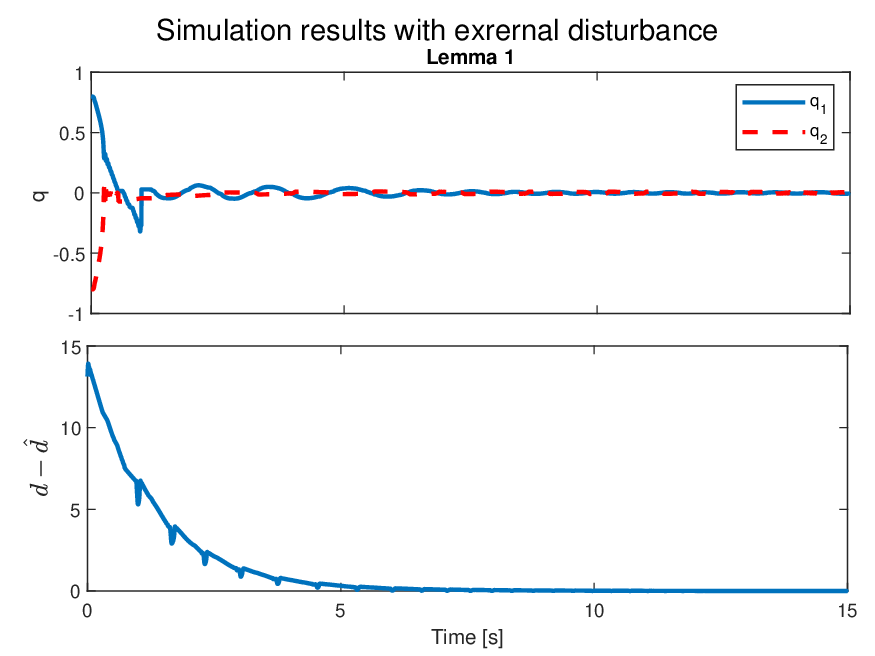}
    	\caption
    	{Simulation results of Lemma~\ref{l1} in the presence of external disturbance. Configuration variables converge to zero using the proposed robust controller.}
    	\label{sh4}
    \end{figure} 
    
    \section{Conclusion}
    This paper proposed a total energy shaping controller for rotary inverted pendulum to stabilize the upright position. For this purpose, a precise solution for the PDE of kinetic and potential energy was derived analytically. Furthermore,  previous works were analyzed and it was shown that their controllers do not satisfy the kinetic energy matching equation. Besides, a robust term was added to the controller to reject a particular type of external disturbance. The results were finally verified via simulations. The future works include implementation of the controller on RIP, and generalizing the proposed disturbance rejection approach for port Hamiltonian systems. 
\bibliographystyle{ieeetr}
\bibliography{reff}

@inproceedings{harandi2021robust,
	title={Robust IDA-PBC for a Spatial Underactuated Cable Driven Robot with Bounded Inputs},
	author={Harandi, M Reza J and Khalilpour, S Ahmad and Taghirad, Hamid},
	booktitle={2021 29th Iranian Conference on Electrical Engineering (ICEE)},
	pages={689--694},
	year={2021},
	organization={IEEE}
}

@article{viola2013some,
	title={Some Remarks on Interconnection and Damping Assignment Passivity-Based Control of Mechanical Systems},
	author={Viola, Giuseppe and Banavar and Acosta, JA and Astolfi, A},
	journal={Taming Heterogeneity and Complexity of Embedded Control},
	pages={721--735},
	year={2013},
	publisher={Wiley Online Library}
}

@article{hernandez2024modeling,
	title={Modeling, simulation, and control of a rotary inverted pendulum: A reinforcement learning-based control approach},
	author={Hernandez, Ruben and Garcia-Hernandez, Ramon and Jurado, Francisco},
	journal={Modelling},
	volume={5},
	number={4},
	pages={1824--1852},
	year={2024},
	publisher={MDPI}
}

@article{franco2025integral,
	title={Integral IDA-PBC for underactuated mechanical systems with unmeasured actuator dynamics and time-varying matched disturbances},
	author={Franco, Enrico and Chen, Kaiwen},
	journal={European Journal of Control},
	pages={101256},
	year={2025},
	publisher={Elsevier}
}

@inproceedings{ryalat2013ida,
	title={IDA-PBC for a class of underactuated mechanical systems with application to a rotary inverted pendulum},
	author={Ryalat, Mutaz and Laila, Dina Shona},
	booktitle={52nd IEEE conference on decision and control},
	pages={5240--5245},
	year={2013},
	organization={IEEE}
}

@article{donaire2017robust,
	title={Robust IDA-PBC for underactuated mechanical systems subject to matched disturbances},
	author={Donaire, Alejandro and Romero, Jose Guadalupe and Ortega, Romeo and Siciliano, Bruno and Crespo, Martin},
	journal={International Journal of Robust and Nonlinear Control},
	volume={27},
	number={6},
	pages={1000--1016},
	year={2017},
	publisher={Wiley Online Library}
}

@article{ferguson2020matched,
	title={On matched disturbance suppression for port-hamiltonian systems},
	author={Ferguson, Joel and Wu, Dongjun and Ortega, Romeo},
	journal={IEEE Control Systems Letters},
	volume={4},
	number={4},
	pages={892--897},
	year={2020},
	publisher={IEEE}
}

@inproceedings{borja2015shaping,
	title={Shaping the energy of port-Hamiltonian systems without solving PDE's},
	author={Borja, Pablo and Cisneros, Rafael and Ortega, Rom{\'e}o},
	booktitle={2015 54th IEEE Conference on Decision and Control (CDC)},
	pages={5713--5718},
	year={2015},
	organization={IEEE}
}

@incollection{karagiannis2006output,
	title={Output feedback stabilization of a class of uncertain systems},
	author={Karagiannis, D and Astolfi, A and Ortega, R},
	booktitle={Nonlinear and Adaptive Control: Tools and Algorithms for the User},
	pages={55--77},
	year={2006},
	publisher={World Scientific}
}

@phdthesis{harandi2021passivity,
	title={Passivity Based Control of 3-DOF
Underactuated Suspended Cable-Driven
Robot},
author={Harandi, M Reza J},
year={2021},
school={ K. N. Toosi University of Technology}
}

@article{harandi2022solution,
	title={Solution of matching equations of IDA-PBC by Pfaffian differential equations},
	author={Harandi, M Reza J and Taghirad, Hamid D},
	journal={International Journal of Control},
	volume={95},
	number={12},
	pages={3368--3378},
	year={2022},
	publisher={Taylor \& Francis}
}

@article{harandi2021matching,
	title={On the matching equations of kinetic energy shaping in ida-pbc},
	author={Harandi, M Reza J and Taghirad, Hamid D},
	journal={Journal of the Franklin Institute},
	year={2021},
	publisher={Elsevier}
}

@article{sandoval2008interconnection,
	title={Interconnection and damping assignment passivity—based control of the pendubot},
	author={Sandoval, Jes{\'u}s and Ortega, Romeo and Kelly, Rafael},
	journal={IFAC Proceedings Volumes},
	volume={41},
	number={2},
	pages={7700--7704},
	year={2008},
	publisher={Elsevier}
}

@article{viola2007total,
	title={Total energy shaping control of mechanical systems: simplifying the matching equations via coordinate changes},
	author={Viola, Giuseppe and Ortega, Romeo and Banavar, Ravi and Acosta, Jos{\'e} {\'A}ngel and Astolfi, Alessandro},
	journal={IEEE Transactions on Automatic Control},
	volume={52},
	number={6},
	pages={1093--1099},
	year={2007},
	publisher={IEEE}
}

@article{acosta2005interconnection,
	title={Interconnection and damping assignment passivity-based control of mechanical systems with underactuation degree one},
	author={Acosta, Jose Angel and Ortega, Romeo and Astolfi, Alessandro and Mahindrakar, Arun D},
	journal={IEEE Transactions on Automatic Control},
	volume={50},
	number={12},
	pages={1936--1955},
	year={2005},
	publisher={IEEE}
}

@inproceedings{acosta2009pdes,
	title={On the PDEs arising in IDA-PBC},
	author={Acosta, Jos{\'e} {\'A}ngel and Astol, A},
	booktitle={Proceedings of the 48h IEEE Conference on Decision and Control (CDC) held jointly with 2009 28th Chinese Control Conference},
	pages={2132--2137},
	year={2009},
	organization={IEEE}
}

@article{romero2018global,
	title={Global stabilisation of underactuated mechanical systems via PID passivity-based control},
	author={Romero, Jose Guadalupe and Donaire, Alejandro and Ortega, Romeo and Borja, Pablo},
	journal={Automatica},
	volume={96},
	pages={178--185},
	year={2018},
	publisher={Elsevier}
}

@inproceedings{donaire2015shaping,
	title={Shaping the energy of mechanical systems without solving partial differential equations},
	author={Donaire, Alejandro and Mehra, Rachit and Ortega, Romeo and Satpute, Sumeet and Romero, Jose Guadalupe and Kazi, Faruk and Singh, Navdeep M},
	booktitle={2015 American Control Conference (ACC)},
	pages={1351--1356},
	year={2015},
	organization={IEEE}
}

@article{ortega2002stabilization,
	title={Stabilization of a class of underactuated mechanical systems via interconnection and damping assignment},
	author={Ortega, Romeo and Spong, Mark W and G{\'o}mez-Estern, Fabio and Blankenstein, Guido},
	journal={IEEE transactions on automatic control},
	volume={47},
	number={8},
	pages={1218--1233},
	year={2002},
	publisher={IEEE}
}

\end{document}